\newcommand{\newreptheorem}[2]{%
\newtheorem*{rep@#1}{\rep@title}%
\newenvironment{#1}[1]{%
 \def\rep@title{#2 \ref{##1}}%
 \begin{rep@#1}}%
 {\end{rep@#1}}}
\newtheoremstyle{plainrep}
  {\topsep}   
  {\topsep}   
  {\itshape}  
  {0pt}       
  {\itshape} 
  {.}         
  {5pt plus 1pt minus 1pt} 
  {}          
\theoremstyle{plain}
\newtheorem{theorem}{Theorem}[section]
\newtheorem{lemma}[theorem]{Lemma}
\newtheorem{claim}[theorem]{Claim}
\theoremstyle{definition}
\newtheorem{definition}{Definition}
\theoremstyle{remark}
\newtheorem{remark}{Remark}
\theoremstyle{plainrep}
\DeclarePairedDelimiter{\bracket}{[}{]}
\DeclarePairedDelimiter{\abs}{\lvert}{\rvert}
\DeclarePairedDelimiter{\norm}{\lVert}{\rVert}
\DeclarePairedDelimiter{\floor}{\lfloor}{\rfloor}
\DeclarePairedDelimiter{\set}{\{}{\}}
\newcommand{\Eop}{\mathop{\mathbf{E}}\nolimits}
\newcommand{\E}{\@ifstar\E@star\E@withoutstar}
\newcommand{\E@withoutstar}{\@ifnextchar[\E@size\E@plain}
\newcommand{\E@star}{\@ifnextchar_\E@starsub\E@starnosub}
\def\E@size[#1]{\@ifnextchar_\E@sizesub{#1}\E@sizenosub{#1}}
\newcommand{\E@plain}{\@ifnextchar_\E@plainsub\E@plainnosub}
\def\E@starsub_#1{\Eop_{#1}\bracket*}
\newcommand{\E@starnosub}{\Eop\bracket*}
\def\E@sizesub#1_#2{\Eop_{#2}\bracket[#1]}
\newcommand{\E@sizenosub}[1]{\Eop\bracket[#1]}
\def\E@plainsub_#1{\Eop_{#1}\bracket}
\newcommand{\E@plainnosub}{\Eop\bracket}
\date{}
\newcommand{\NN}{\mathbb{N}}
\newcommand{\A}{{\mathcal{A}}}
\newcommand{\GP}{{\mathcal{GP}}}
\newcommand{\GPk}{{\mathcal{GP}_k}}
\newcommand{\Pk}{{\mathcal{P}}}
\newcommand{\U}{{\mathcal{U}}}
\newcommand{\Uk}{{\mathcal{U}_k}}
\newcommand{\deq}{\stackrel{\textrm{def}}{=}}
\newcommand{\zo}{\{0,1\}}
\newcommand{\zon}{\{0,1\}^n}
\newcommand{\zonk}{\{0,1\}^{n\times k}}
\newcommand{\zolk}{\{0,1\}^{l\times k}}
\newcommand{\Vkt}{\mathcal{V}_{k,t}}
\newcommand{\tlr}{\tilde{r}}
\newcommand{\tlR}{\tilde{R}}
\newcommand{\vR}{\vec{R}}
\newcommand{\vr}{\vec{r}}
\newcommand{\GF}{\mathrm{GF}}
\title{Shared Randomness and Quantum Communication \\
  in the Multi-Party Model}
\author{%
  Dmitry Gavinsky\thanks{NEC Laboratories America, Inc., Princeton, NJ, USA.} \qquad
  Tsuyoshi Ito\footnotemark[1] \qquad
  Guoming Wang\footnotemark[1] \thanks{Computer Science Division, University of California, Berkeley, Berkeley, CA, USA.}
}
\begin{document}

\maketitle

\vspace*{-4ex}

\begin{abstract}
  We study shared randomness in the context of multi-party number-in-hand communication protocols
  in the simultaneous message passing model.
  We show that with three or more players,
  shared randomness exhibits new interesting properties
  that have no direct analogues in the two-party case.

  First, we demonstrate a hierarchy of modes of shared randomness,
  with the usual shared randomness where all parties access the same random string
  as the strongest form in the hierarchy.
  We show exponential separations between its levels,
  and some of our bounds may be of independent interest.
  For example, we show that the equality function can be solved
  by a protocol of constant length using the weakest form of shared randomness,
  which we call \emph{XOR-shared randomness}.

  Second, we show that quantum communication cannot replace shared randomness in the $k$-party case,
  where~$k\ge3$ is any constant.
  We demonstrate a promise function~$\GPk$
  that can be computed by a classical protocol of constant length
  when (the strongest form of) shared randomness is available,
  but any quantum protocol without shared randomness
  must send~$n^{\Omega(1)}$ qubits to compute it.
  Moreover, the quantum complexity of~$\GPk$ remains~$n^{\Omega(1)}$
  even if the ``second strongest'' mode of shared randomness is available.
  While a somewhat similar separation was already known in the two-party case,
  in the multi-party case our statement is qualitatively stronger:
  \begin{itemize}
  \item
    In the two-party case, only a relational communication problem
    with similar properties is known.
  \item
    In the two-party case, the gap between the two complexities of a problem can be at most exponential,
    as it is known that~$2^{O(c)}\log n$ qubits can always replace shared randomness in any~$c$-bit protocol.
    Our bounds imply that with quantum communication alone, in general,
    it is not possible to simulate efficiently even a three-bit three-party classical protocol
    that uses shared randomness.
  \end{itemize}
\end{abstract}

\section{Introduction}

The area of communication complexity deals with the amount of communication
required for solving computational problems with distributed input.
In the two-party \emph{simultaneous message passing (SMP)} setting of communication complexity,
two \emph{players} Alice and Bob receive inputs~$x$ and~$y$, respectively,
and each sends a message to a third party, the \emph{referee}.
Using those messages, the referee computes the output value.
When the goal is to compute certain function~$f$,
the success is measured by the probability
that the output value of a \emph{communication protocol} equals $f(x,y)$.
The \emph{cost} of a communication protocol is the total number of bits
sent by the players to the referee.

Shared randomness is a crucial resource in communication complexity.
When Alice and Bob have it, they can use a mixed strategy in order to compute $f$;
in particular, the minimax principle implies
that the worst-case and the average-case complexities are equal in this case.
It is known that without shared randomness the model becomes considerably weaker,
and the gap between the worst-case and the average-case complexities
of a communication problem can be arbitrary large
(constant vs.~$\Omega(\sqrt{n})$ in the case of the equality function,
as shown by Newman and Szegedy~\cite{NS96_Pub}).

The SMP model of communication is the weakest among those that have been studied widely.
Nevertheless, it is arguably the right model to look at when the goal is to investigate shared randomness.
That is because whenever communication between the players is possible
(which is the case for all other commonly studied models, but not for SMP),
the first player can append~$O(\log n)$ bits of \emph{private} randomness
to the first message that is sent to the others,
and that would not affect the cost of the protocol significantly,
as poly-logarithmic cost is usually viewed as efficient.
Those random bits are now known to all the participants,
and can be used in place of shared randomness.
It is known due to Newman~\cite{N91_Pri} that~$O(\log n)$ bits of shared randomness are always enough;
therefore, shared randomness does not make much difference
in any model that allows direct communication between the players.

In this paper we study shared randomness in the context of the \emph{multi-party version} of the SMP model,
where the number of players~$k$ is three or larger (the referee is not counted as a player),
the input has~$k$ fragments and each fragment is known to exactly one player---%
this regime of distributing input between the players is usually called ``number in hand.''
This model can be viewed as a natural generalization of the two-party model.

We demonstrate several interesting (and somewhat surprising) properties of shared randomness
when the number of players is at least three, that have no direct analogues in the two-party case.

\subsection{Previous work}

In~\cite{Y03_On_th}, Yao generalized the technique of \emph{quantum fingerprints}~\cite{BCWW01_Qu}
to show that every classical two-party SMP protocol that uses shared randomness and sends~$c$ bits
can be simulated by a quantum protocol without shared randomness that sends~$2^{O(c)}\log n$ qubits.
This naturally raised the question whether quantum communication can always replace shared randomness---%
that is, whether any communication problem that can be solved
by a classical SMP protocol of poly-logarithmic length using shared randomness
can also be solved by a quantum protocol of poly-logarithmic length without shared randomness.

The question was addressed by Gavinsky, Kempe, Regev and de~Wolf in~\cite{GKRW06_Bou},
where they demonstrated a two-party \emph{relational} communication problem
that can be solved by a classical protocol of cost~$O(\log n)$ that uses shared randomness,
but requires~$n^{\Omega(1)}$ qubits in order to be solved by a quantum protocol without shared randomness.
In the same work a question was posed whether a similar separation is possible via a \emph{functional} problem.

\section{Our results}

Two main results of this work are the following.

First, we establish a hierarchy of modes of shared randomness (Section~\ref{s_srand}).
In the $k$-party SMP model, we consider \emph{$t$-shared randomness} for~$2\le t\le k$,
where every set of~$t$ players shares a random string.
The~$k$-shared randomness is the usual, unrestricted shared randomness
and the strongest mode in the hierarchy,
and a smaller value of~$t$ gives a weaker form of shared randomness.
The~$(k-1)$-shared randomness could be also called ``randomness on the forehead.''
Below~$2$-shared randomness, we also consider an even more restricted mode of shared randomness
which we call \emph{XOR-shared randomness},
where the~$k$ players receive uniformly random~$k$-tuples of bits whose parity is~$0$.
The precise definitions of these modes of shared randomness
will be given in Section~\ref{s_srand}.
We will show that this is a proper hierarchy; i.e., we show exponential separations between its levels.

One of the problems that we study in this context is the multi-party equality function,
and we show (Claim~\ref{c_k-eq}) that it can be solved by a protocol of constant length
that uses XOR-shared randomness.
We believe that this result might be of independent interest, due to the importance of the equality function.

Second, we demonstrate a promise function
whose classical communication complexity is constant if the strongest form of shared randomness is available,
but whose quantum communication complexity is~$n^{\Omega(1)}$ if no shared randomness is available (Section~\ref{s_vs}).
Moreover, the quantum complexity remains~$n^{\Omega(1)}$
even if the protocol can use $(k-1)$-shared randomness
(randomness on the forehead).

Our second result is closely related to~\cite{GKRW06_Bou}:
We demonstrate a \emph{promise function}
that can be solved efficiently in the classical model with shared randomness,
but not in the quantum model without it.
This answers the main open problem posed in~\cite{GKRW06_Bou} for the case of three or more players.
We note that the question remains wide open in the two-player case.

Our second result is also related to the aforementioned work by Yao~\cite{Y03_On_th},
where it was shown, informally speaking, that shared randomness can be replaced by quantum communication
with (at most) exponential overhead.
In this work we demonstrate a (functional) communication problem
that can be solved by a three-bit three-party classical protocol with shared randomness
but requires~$n^{\Omega(1)}$ qubits without shared randomness
(or even with randomness on the forehead).
Accordingly, the possibility to simulate shared randomness by quantum communication
is a unique feature of the two-party model;
with more than two players, the possible advantage of shared randomness over quantum communication
is not bounded by any function.

\subsection{Technical statements}

In the first part, we prove the following.

\newcommand{\theohie}{%
  Let $k\ge3$.
  Then,
  \begin{itemize}
  \item
    For each $t\in\{3,\dots,k\}$, there exists a~$k$-party promise function
    that can be solved by a protocol of cost~$t$ in the SMP model
    with classical communication and~$t$-shared randomness
    but requires~$\Omega(tn^{1/t})$ qubits in the SMP model
    with quantum communication and $(t-1)$-shared randomness.
  \item
    There exists a~$k$-party total function
    that can be solved by a protocol of constant cost
    in the classical SMP model with $2$-shared randomness
    but requires~$\Omega(\sqrt{n})$ bits of communication
    in the classical SMP model with XOR-shared randomness.
  \item
    There exists a~$k$-party total function
    that can be solved by a protocol of constant cost
    in the classical SMP model with XOR-shared randomness
    but requires~$\Omega(\sqrt{n})$ bits of communication
    in the classical SMP model without shared randomness.
  \end{itemize}
}

\begin{theorem}
  \label{t_hie}
  \theohie
\end{theorem}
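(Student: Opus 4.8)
Let me think about how to prove each of the three parts.

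**Part 1: The $t$-shared randomness hierarchy via a promise function computed cheaply with $t$-shared randomness but expensive with $(t-1)$-shared randomness and quantum communication.**

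This is the heart of the matter. The upper bound: with $t$-shared randomness, every $t$-subset of players shares a string. So we want a function on $k$ fragments where one particular $t$-subset... actually, more naturally: we want a function that's "localized" to needing coordination among $t$ players.

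Think about equality-like structure. With $t$ players sharing a random string $r$, and input fragments $x_1, \ldots, x_k$... Actually a natural candidate: the input is $x_1, \ldots, x_k \in \{0,1\}^n$ with the promise that... hmm. Actually I think the cleanest thing: take players $1, \ldots, t$, give them inputs that are supposed to encode a "fingerprint consistency." Actually, let me think about the $\GP_k$ structure teased in the abstract — "generalized promise."

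Here's a guess at the construction. For parameter $t$, consider input $(x_1, \ldots, x_k)$ where each $x_i \in \{0,1\}^n$, with a promise that either all of $x_1, \ldots, x_t$ are "equal" in some sense, or they're "far." With $t$-shared randomness $r$, each of players $1, \ldots, t$ computes a 1-bit fingerprint $\langle r, x_i \rangle$ and sends it; players $t+1, \ldots, k$ send nothing (or a constant). The referee checks all $t$ bits agree. Cost $t$. Correctness by the standard fingerprinting/Newman argument with the promise.

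For the lower bound under $(t-1)$-shared randomness plus quantum: the point is that with only $(t-1)$-shared randomness, no single string is shared by all of players $1, \ldots, t$, so they can't coordinate a joint fingerprint — they'd need to reconstruct the missing correlation, which requires communication. This should reduce to a known quantum SMP lower bound. The natural target is the quantum lower bound for equality-type or "Hidden Matching"-type problems in SMP, or a direct information-theoretic argument. I'd expect to reduce to the quantum communication complexity of something like the $t$-party equality problem without shared randomness, and the $n^{1/t}$ exponent suggests a fingerprinting-dimension / counting argument: a quantum message of $q$ qubits can distinguish at most $2^{O(q)}$... no — the $tn^{1/t}$ form suggests the inputs live in a space of size roughly $2^{n}$ split into $t$ "rounds" each contributing $n^{1/t}$, i.e., a recursive or product construction where you pay $n^{1/t}$ per level. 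This is the step I expect to be the main obstacle: getting the exact $\Omega(t n^{1/t})$ quantum lower bound under $(t-1)$-shared randomness, presumably by a hybrid argument peeling off one shared-randomness level at a time, each peel reducing the effective problem size by a $t$-th root, combined with a base-case quantum SMP lower bound (likely via the quantum information bound or a reduction to a two-party quantum SMP lower bound such as that for equality or for a relational problem à la \cite{GKRW06_Bou}).

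**Parts 2 and 3: Separating $2$-shared from XOR-shared, and XOR-shared from no shared randomness.**

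For Part 3, the upper bound is exactly Claim \ref{c_k-eq}: multi-party equality in constant cost with XOR-shared randomness. The matching lower bound $\Omega(\sqrt n)$ for equality without any shared randomness is the classical Newman–Szegedy bound \cite{NS96_Pub} (adapted to $k$ parties, which only makes it harder), so Part 3 follows immediately by taking the total function to be $k$-party equality.

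For Part 2, I need a total function solvable in constant cost with $2$-shared randomness but requiring $\Omega(\sqrt n)$ bits with only XOR-shared randomness. The intuition: $2$-shared randomness lets any pair coordinate a genuine shared string, whereas XOR-shared randomness only gives tuples summing to $0$ — so any single player's share plus any $k-1$ of the others' shares are independent, and in particular no proper subset of players has a nontrivially shared random bit in the sense needed for pairwise fingerprint comparison. A candidate function: "pairwise equality" — output $1$ iff $x_i = x_j$ for all $i,j$ — no, that's the same as equality and solvable with XOR-shared randomness. Instead something like: the input is $(x_1, \ldots, x_k)$ and we must decide a predicate that genuinely needs a $2$-subset (not all $k$) to agree on randomness in a way XOR-sharing can't supply; e.g. an "OR of pairwise equalities" $\bigvee_{i<j}[x_i = x_j]$, or a promise-free version where the structure forces the protocol, when restricted by fixing XOR-shares, to behave like a private-coin protocol on a hard instance. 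The upper bound with $2$-shared randomness: for each pair run a constant-cost equality test, accept if any pair matches; total cost $O(k^2) = O(1)$. The lower bound with XOR-shared randomness: argue that XOR-shared randomness is useless (or nearly so) for this function because conditioning on the XOR-shares leaves the players with independent private randomness, then invoke the private-coin $\Omega(\sqrt n)$ bound for (a sub-instance reducing to) equality. The obstacle here is making precise the claim that XOR-shared randomness "collapses" to private randomness for the chosen function — this needs either a direct combinatorial argument that any XOR-shared-randomness protocol can be converted to a private-coin protocol of comparable cost for this function, or a tailored lower bound; I'd expect to prove a general reduction lemma (XOR-shared randomness on a function with no "global agreement" structure is simulable privately) and then apply the classical SMP private-coin lower bound.

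I would carry out the steps in this order: (i) formalize the three witness functions; (ii) give the three easy upper bounds (fingerprinting with $t$-, $2$-, XOR-shared randomness respectively); (iii) prove Part 3's lower bound by citing/adapting Newman–Szegedy; (iv) prove Part 2's lower bound via the XOR-to-private reduction lemma plus the classical bound; (v) prove Part 1's quantum lower bound by the hybrid/peeling argument reducing to a base-case quantum SMP bound — this last being the technically hardest component.
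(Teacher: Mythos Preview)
Your plan for Part~3 is correct and matches the paper: the witness is $k$-party equality, with the upper bound from Claim~\ref{c_k-eq} and the lower bound by embedding into two-party equality and citing Newman--Szegedy.

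For Part~1, however, you have missed the right function and the right argument. The paper does not use an equality-type fingerprinting problem at all; the witness is simply $\GP_t$ (Gap-Parity on $t$ players), with players $\A_{t+1},\dots,\A_k$ receiving no input. The upper bound is the trivial protocol from Theorem~\ref{t_main} (each of the $t$ players sends one bit using $t$-shared randomness). The lower bound \emph{is} Theorem~\ref{t_main} itself, applied with $k$ replaced by $t$ and with $(t-1)$-shared randomness: it gives $\Omega\!\bigl(t\,n^{1-(t-1)/t}\bigr)=\Omega(t\,n^{1/t})$ qubits. So Part~1 is a one-line corollary of Theorem~\ref{t_main}, not a separate hybrid/peeling argument, and your speculation about ``equality among $t$ players'' with a recursive $n^{1/t}$-per-level structure is heading in the wrong direction; in particular it is unclear your candidate function would yield the stated exponent or that your sketched peeling would go through.

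For Part~2, your choice of function (OR of pairwise equalities) and your proposed ``XOR-to-private reduction lemma'' are more ambitious than what is needed, and the lemma as you state it is not obviously true in general. The paper's witness is much simpler: plain \emph{two-party} equality, with input given only to $\A_1$ and $\A_2$. The key observation you are missing is that under XOR-shared randomness with $k\ge 3$, the strings $r_1$ and $r_2$ are uniform and \emph{mutually independent} (any $k-1$ of the $r_i$ are); all the correlation sits with $r_3,\dots,r_k$, and the referee can generate those himself along with the dummy messages of $\A_3,\dots,\A_k$. This converts any cost-$c$ XOR-shared protocol into a two-party SMP protocol in which $\A_1$ and $\A_2$ share no randomness with each other but each shares randomness with the referee; by Newman's argument this becomes a genuine private-coin two-party protocol of cost $O(c+\log n)$, and Newman--Szegedy then forces $c=\Omega(\sqrt n)$. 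Your approach, by contrast, would have to control correlations among all $k$ players for a genuinely $k$-ary function, which is unnecessary.
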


In the second part of this work,
we study the following natural communication problem.
For a bit string~$x$, we denote by~$\abs{x}$ its Hamming weight,
i.e.\ the number of 1s in~$x$.

\begin{definition}[Gap-Parity]
  Let~$x_1,\dots,x_k$ be~$n$-bit strings
  such that~$\abs{x_1\oplus\dots\oplus x_k}\notin[n/3,2n/3]$.
  Then we define~$\GPk(x_1,\dots,x_k)=0$
  if~$\abs{x_1\oplus\dots\oplus x_k}<n/2$
  and~$\GPk(x_1,\dots,x_k)=1$ otherwise.
\end{definition}

Note that in the SMP model with classical communication and shared randomness,
$\GPk$ has a trivial solution, where each player sends only one bit to the referee.

We will demonstrate that for~$k\ge3$,
$\GPk$ cannot be solved efficiently by a quantum protocol without shared randomness.
Moreover, we show that the Gap-Parity problem has no efficient solution with quantum communication
even with randomness on the forehead
(cf.\ Section~\ref{s_srand}).

\newcommand{\theomain}{%
  Let~$k\ge3$.
  Using shared randomness, the~$k$-party promise function~$\GPk$ can be solved
  by a classical SMP protocol of cost~$k$ where each player sends a bit.
  For~$2\le t\le k-1$,
  in the SMP model with quantum communication with~$t$-shared randomness,
  the complexity of~$\GPk$ is~$\Omega(kn^{1-t/k})$.
  In particular, in the SMP model with quantum communication without shared randomness,
  the complexity of~$\GPk$ is~$\Omega(kn^{1-2/k})$.
}

\begin{theorem}
  \label{t_main}
  \theomain
\end{theorem}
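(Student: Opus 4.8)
The plan is to establish the two parts of Theorem~\ref{t_main} separately, and the meat of the argument is entirely in the lower bound. The upper bound is the trivial observation already noted after the definition of~$\GPk$: with $k$-shared randomness, the players interpret the common random string as a uniformly random index $i\in[n]$, and player~$j$ sends the bit $(x_j)_i$. The referee XORs the $k$ received bits to obtain $(x_1\oplus\dots\oplus x_k)_i$, which is~$1$ with probability $\abs{x_1\oplus\dots\oplus x_k}/n$; by the promise this probability is either below~$1/3$ or above~$2/3$, so repeating a constant number of times and taking a majority vote amplifies the gap to bounded-error. One can in fact push the cost down to $k$ bits total (one bit per player) by a slightly more careful argument, but for the statement as written a constant-cost protocol already suffices, so I would not belabor this.

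For the lower bound I would reduce from a hard \emph{distributional} problem and use an information-theoretic / discrepancy-type argument adapted to the SMP model with quantum messages and $t$-shared randomness. The first step is to fix a hard input distribution: let $x_1,\dots,x_{k-1}$ be uniform and independent, and let $x_k$ be chosen so that $x_1\oplus\dots\oplus x_k$ is a uniformly random string of Hamming weight either slightly below $n/2$ or slightly above $n/2$ (the two cases equiprobable), conditioned to land outside $[n/3,2n/3]$; distinguishing these two cases is exactly computing~$\GPk$. The key structural point is that $t$-shared randomness, with $t\le k-1$, can be absorbed: since no coalition of $t$ players sees the whole shared string, one can condition on the shared randomness and argue that on average the players are in the situation of a \emph{private-coin} quantum SMP protocol on a correlated input distribution. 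I would then invoke a quantum information cost or a quantum fingerprinting lower bound: if the total number of qubits is $q$, then the referee's state carries at most $O(q)$ qubits of information, and by Holevo / a hybrid argument the referee cannot distinguish the two cases unless $q$ is polynomially large; quantitatively, the correlation the referee can extract about the parity string decays, and one gets $q=\Omega(kn^{1-t/k})$.

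The technically delicate step — and the one I expect to be the main obstacle — is making the dependence on $t$ come out as $n^{1-t/k}$ rather than something weaker. The natural intuition is a "covering" phenomenon: with $t$-shared randomness the $k$ players can privately agree, within each $t$-subset, on which coordinates to reveal, and a union bound over the $\binom{k}{t}$ subsets lets them collectively reveal roughly $n^{t/k}$ coordinates "for free" (this is essentially why the hierarchy theorem, Theorem~\ref{t_hie}, has an $n^{1/t}$-type bound). So the lower bound must show this is essentially optimal: any quantum protocol beating $\Omega(kn^{1-t/k})$ could be converted into too-good a shared-randomness-free protocol for a sub-problem on $n^{t/k}$-sized blocks, contradicting a base-case lower bound (most plausibly a quantum SMP lower bound for equality-like or Hamming-distance-like functions, in the spirit of the $\Omega(\sqrt{n})$ bounds of the two-party theory). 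Concretely I would: (i) partition $[n]$ into $\approx n^{1-t/k}$ blocks each of size $\approx n^{t/k}$; (ii) embed into each block an instance of a hard two-valued problem; (iii) show that the shared randomness, being only $t$-wise, cannot coordinate the $k$ players across a constant fraction of blocks simultaneously, so on an average block the protocol is effectively private-coin; (iv) apply a direct-sum / information-complexity argument across blocks together with the per-block quantum lower bound. Getting the quantifiers in (iii) right — quantifying how $t$-shared randomness fails to coordinate across blocks, and carrying a quantum (not classical) per-block bound through the direct sum — is where the real work lies; everything else is bookkeeping.
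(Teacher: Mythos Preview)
Your upper bound sketch is fine and matches the paper's. The lower bound proposal, however, has a genuine gap: the block-decomposition-plus-direct-sum strategy you outline does not lead to the exponent $1-t/k$, and the intuition you give for why $t$-shared randomness should buy the players ``about $n^{t/k}$ coordinates for free'' has no actual mechanism behind it (there is no relationship between the number $\binom{k}{t}$ of $t$-subsets and the quantity $n^{t/k}$). You yourself flag step~(iii) as ``where the real work lies,'' and indeed nothing in the proposal explains how $t$ enters the exponent.

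The paper's argument is quite different in shape and does not partition the coordinates into blocks at all. It works coordinate-by-coordinate via a hybrid. For a single coordinate $j$ under uniform input, one first conditions on the shared randomness $\vec r=(r_S)_{S\in\mathcal V_{k,t}}$; conditioned on $\vec r$ the $k$ messages are independent, so by Claim~\ref{c_xor} the trace-norm bias for predicting $\bigoplus_i X_i(j)$ equals the product $\prod_{i=1}^k \alpha_{i,\vec r}(j)$, where $\alpha_{i,\vec r}(j)=\tfrac12\norm{\sigma^i_{0,\vec r}(j)-\sigma^i_{1,\vec r}(j)}_1$. The entire role of the parameter $t$ is isolated in the next step: averaging this product over $\vec r$. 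The crucial observation is that $\alpha_{i,\vec r}(j)$ depends only on $(r_S)_{S\ni i}$, so the variables $\alpha_{1,\vec R}(j),\dots,\alpha_{k,\vec R}(j)$ form a \emph{read-$t$ family} over the independent blocks $(R_S)_S$. Finner's inequality (Lemma~\ref{l_Hold}) then gives
\[
\E\Bigl[\prod_{i=1}^k \alpha_{i,\vec R}(j)\Bigr]
\;\le\;\Bigl(\prod_{i=1}^k \E\bigl[\alpha_{i,\vec R}(j)^t\bigr]\Bigr)^{1/t}
\;\le\;\Bigl(\prod_{i=1}^k \E\bigl[\alpha_{i,\vec R}(j)^2\bigr]\Bigr)^{1/t},
\]
and combining AM--GM with the random-access-code bound of Claim~\ref{c_rac} yields $(O(c/(kn)))^{k/t}$, which is $o(1/n)$ exactly when $c=o(kn^{1-t/k})$. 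That is where the exponent comes from: the read-$t$ structure of the shared randomness feeds directly into the H\"older-type exponent $1/t$, which after AM--GM becomes $k/t$. A hybrid over coordinates (Lemma~\ref{l_hyb-step} plus the inductive step in the proof of Theorem~\ref{t_main}, where fixing earlier XOR-bits is absorbed as extra XOR-shared randomness, itself a form of $2$-shared randomness) then converts the per-coordinate $o(1/n)$ bias into a global $o(1)$ distinguishing advantage between the all-$0$-XOR and all-$1$-XOR distributions. None of this is captured by a block/direct-sum picture; the missing idea in your proposal is precisely the read-$t$/Finner step.
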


\section{Preliminaries}
  \label{s_prel}

For any~$n$-dimensional vector~$v$,
we will write~$v(j)$ to denote its~$j$th coordinate,
and for any~$S\subseteq[n]$ we will use~$v_S$
to denote the restriction of~$v$ to the coordinates that are elements of~$S$.
We use~$\bar{0}$ or~$\bar{1}$ to denote the vectors of all $0$s or all $1$s, respectively,
when its length is clear from the context.

For any finite set~$W$, let~$\U(W)$ denote the uniform distribution over the elements of~$W$.

\subsection{Quantum measurements}

Unless stated otherwise, we will represent quantum states by their density matrices.
We will write~$\E_i{\sigma_i}$ or even $\E{\sigma_i}$ to denote the mixed state~$(1/k)\sum_i\sigma_i$.

Given a matrix~$M$, we denote by~$\norm{M}_1$ the \emph{trace norm} of~$M$,
defined as the sum of the singular values of~$M$.
It is known that given two quantum states~$\sigma_0$ and~$\sigma_1$,
the optimal probability with which a quantum measurement
can correctly distinguish between~$\sigma_0$ and~$\sigma_1$
equals $1/2+\norm{\sigma_0-\sigma_1}_1/4$.

We will use the following special case of the ``random access code argument''~\cite[Lemma~2.2]{GKRW06_Bou},
which is a slight generalization of~\cite[Theorem~2.3]{N99_Op_Lo} (see also~\cite{ANTV02_De}).

\begin{claim}
  \label{c_rac}
  Let~$X \sim \U(\zon)$.
  Suppose for each instantiation~$X=x$ we have a quantum state~$\rho_x$ of~$q$ qubits.
  Let~$\sigma_a^j$ be the expectation of~$\rho_X$ conditional upon~$X(j)=a$,
  for~$j\in[n]$ and~$a \in \zo$.
  Then~$\sum_{j=1}^n\norm{\sigma_0^j-\sigma_1^j}_1^2 \in O(q)$.
\end{claim}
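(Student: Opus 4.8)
The plan is to prove this by quantum information theory, reproducing the argument behind Nayak's random-access-code bound (the claim is essentially \cite[Lemma~2.2]{GKRW06_Bou}). Write $\rho\deq\E_X{\rho_X}=2^{-n}\sum_x\rho_x$ for the average encoding state; it lives on $q$ qubits, so its von Neumann entropy satisfies $S(\rho)\le q$. Consider the classical-quantum state that stores $X$ in a classical register alongside $\rho_X$ in a quantum register $Q$, and let $\chi\deq S(\rho)-2^{-n}\sum_x S(\rho_x)$ be its Holevo information $I(X:Q)$. Since every $S(\rho_x)\ge0$ we immediately get $\chi\le S(\rho)\le q$. The first step is to show $\sum_{j=1}^n I(X(j):Q)\le\chi$. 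This follows from the fact that $X(1),\dots,X(n)$ are mutually independent: by the chain rule for mutual information, $\chi=\sum_{j=1}^n I(X(j):Q\mid X(1),\dots,X(j-1))$, and applying the identity $I(A:C\mid B)-I(A:C)=I(A:B\mid C)-I(A:B)$ with the independence $I(X(j):X(<j))=0$ and nonnegativity of conditional mutual information (strong subadditivity) gives $I(X(j):Q\mid X(<j))\ge I(X(j):Q)$ for each $j$. Hence $q\ge\chi\ge\sum_{j=1}^n I(X(j):Q)$.

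The second step is to bound each term $I(X(j):Q)$ from below by a constant times $\norm{\sigma_0^j-\sigma_1^j}_1^2$. Because $X(j)$ is uniform, $\rho=\tfrac12\sigma_0^j+\tfrac12\sigma_1^j$, and $I(X(j):Q)$ is precisely the Holevo quantity of the two-element ensemble $\{(\tfrac12,\sigma_0^j),(\tfrac12,\sigma_1^j)\}$, which equals $\tfrac12 D(\sigma_0^j\,\|\,\rho)+\tfrac12 D(\sigma_1^j\,\|\,\rho)$ where $D(\cdot\,\|\,\cdot)$ is the quantum relative entropy. Now invoke quantum Pinsker's inequality $D(\tau\,\|\,\rho)\ge\frac{1}{2\ln 2}\norm{\tau-\rho}_1^2$, together with the observation $\sigma_0^j-\rho=\tfrac12(\sigma_0^j-\sigma_1^j)=-(\sigma_1^j-\rho)$, so that both relative entropies are at least $\frac{1}{8\ln 2}\norm{\sigma_0^j-\sigma_1^j}_1^2$. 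This yields $I(X(j):Q)\ge\frac{1}{8\ln 2}\norm{\sigma_0^j-\sigma_1^j}_1^2$.

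Combining the two steps,
\[
  \sum_{j=1}^n\norm{\sigma_0^j-\sigma_1^j}_1^2\le 8\ln 2\cdot\sum_{j=1}^n I(X(j):Q)\le 8\ln 2\cdot q\in O(q),
\]
which is the claim.

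I do not expect any genuine obstacle here: every ingredient—subadditivity and nonnegativity of von Neumann entropy, the chain rule for mutual information, the identification of the Holevo quantity with average relative entropy to the ensemble average, and quantum Pinsker—is standard. The only points requiring a little care are the independence argument in the first step (alternatively one can phrase it as superadditivity, $\sum_j I(X(j):Q)\le I(X:Q)$ for product $X$) and making sure one is consistent about measuring entropy in bits (so that $S(\rho)\le q$ and the $\ln 2$ appears in Pinsker's inequality); the precise constant is irrelevant for the $O(q)$ conclusion. Since this statement is from the literature, it would also be reasonable to simply cite \cite[Lemma~2.2]{GKRW06_Bou} and \cite[Theorem~2.3]{N99_Op_Lo} and include the above only as a sketch for completeness.
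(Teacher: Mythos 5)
Your proof is correct, but it takes a different (self-contained) route from the paper's. The paper's proof is a two-line affair: it directly invokes Lemma~2.2 of the cited work of Gavinsky, Kempe, Regev and de~Wolf, which gives the random-access-code bound in the form $\sum_{j=1}^n\bigl(1-h(1/2-\norm{\sigma_0^j-\sigma_1^j}_1/4)\bigr)\le q$ (i.e., phrased via the binary entropy of the optimal guessing probability for each bit), and then converts this to the squared trace norm with the elementary inequality $1-h(1/2-x/4)\ge x^2/(8\ln 2)$. You instead unpack the information-theoretic content behind that lemma: Holevo information $I(X:Q)\le S(\rho)\le q$, superadditivity $\sum_j I(X(j):Q)\le I(X:Q)$ for independent bits (your chain-rule plus strong-subadditivity argument is fine), and then quantum Pinsker's inequality applied to $D(\sigma_a^j\,\|\,\tfrac12\sigma_0^j+\tfrac12\sigma_1^j)$, which passes directly from the per-bit Holevo quantity to $\norm{\sigma_0^j-\sigma_1^j}_1^2$ without ever going through guessing probabilities or the Fano-type entropy bound. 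Amusingly, both routes produce the same constant $8\ln 2$. What the paper's approach buys is brevity, at the cost of leaning on an external lemma; what yours buys is a self-contained derivation that substitutes Pinsker for the binary-entropy calculus step, at the cost of invoking the heavier standard toolkit (strong subadditivity, quantum Pinsker). Your closing remark is apt: given that the statement is essentially the cited Lemma~2.2, citing it and keeping your derivation as a sketch is exactly what the paper does, minus the sketch.
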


\begin{proof}
  Let~$h(p)$ be the binary entropy function: $h(p)=-p\log_2 p-(1-p)\log_2(1-p)$.
  Lemma~2.2 of~\cite{GKRW06_Bou} implies that under the assumption of the claim,
  it holds that~%
  $\sum_{j=1}^n \bigl(1-h(1/2-\norm{\sigma_0^j-\sigma_1^j}_1/4)\bigr) \le q$.
  The claim follows because~$1-h(1/2-x/4)\ge x^2/(8\ln2)$ for~$0\le x\le2$.
\end{proof}

Now let us consider the situation where a quantum measurement is performed
in order to predict the parity of several independent binary variables.
\begin{claim}
  \label{c_xor}
  For every~$i\in[m]$, let~$\sigma_0^i$ and~$\sigma_1^i$ be quantum states of equal dimension.
  For~$a\in\zo$, let~%
  $\rho_a\deq\E_{\alpha_1\oplus\dots\oplus\alpha_m=a}{\sigma_{\alpha_1}^1\otimes\dots\otimes\sigma_{\alpha_m}^m}$.
  Then~%
  $\norm{\rho_0-\rho_1}_1
   =(1/2^{m-1})\prod_{i=1}^m\norm{\sigma_0^i-\sigma_1^i}_1$.
\end{claim}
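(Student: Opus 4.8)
The plan is to diagonalize each difference $\sigma_0^i-\sigma_1^i$ and expand everything in the resulting product basis. Write $D_i \deq \sigma_0^i - \sigma_1^i$, a Hermitian (traceless) matrix with a spectral decomposition $D_i = \sum_{s} \lambda^i_s \, |v^i_s\rangle\langle v^i_s|$, so that $\norm{D_i}_1 = \sum_s |\lambda^i_s|$. The key algebraic identity is that averaging a tensor product over strings $\alpha$ with fixed parity $a$ produces a signed average: since $\tfrac12(\sigma_0^i+\sigma_1^i)$ and $\tfrac12(\sigma_0^i-\sigma_1^i) = \tfrac12 D_i$ are exactly the "sum" and "difference" parts, one checks that
\[
\rho_a = \E_{\alpha_1\oplus\dots\oplus\alpha_m=a}{\sigma_{\alpha_1}^1\otimes\dots\otimes\sigma_{\alpha_m}^m}
= \frac{1}{2^{m-1}}\sum_{\epsilon_1,\dots,\epsilon_m \in \{0,1\}} \Bigl(\prod_i (-1)^{a\epsilon_i}\Bigr)\,\text{(appropriate term)},
\]
and in particular the clean statement is that
\[
\rho_0 - \rho_1 = \frac{1}{2^{m-1}}\, D_1 \otimes D_2 \otimes \dots \otimes D_m .
\]
I would verify this last identity first, as it is the heart of the argument: expand each $\sigma_{\alpha_i}^i = \tfrac12(\sigma_0^i+\sigma_1^i) + \tfrac12(-1)^{\alpha_i} D_i$, multiply out the tensor product over $i$, and observe that when we average over all $\alpha$ with $\alpha_1\oplus\dots\oplus\alpha_m = a$ and then subtract the $a=0$ and $a=1$ averages, every term containing at least one "sum" factor $\tfrac12(\sigma_0^i+\sigma_1^i)$ cancels (its sign $(-1)^{\alpha_{i_1}+\dots}$ averages to zero over the relevant coset once a coordinate is unconstrained), leaving only the single all-difference term $\bigotimes_i \tfrac12(-1)^{\alpha_i}D_i$, whose parity-weighted contribution is $2^{-(m-1)}\bigotimes_i D_i$ with the right overall sign.

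Granting the identity $\rho_0-\rho_1 = 2^{-(m-1)} D_1\otimes\dots\otimes D_m$, the conclusion is immediate from multiplicativity of the trace norm under tensor products: $\norm{A\otimes B}_1 = \norm{A}_1 \norm{B}_1$, which follows because the singular values of $A\otimes B$ are exactly the pairwise products of the singular values of $A$ and $B$. Hence
\[
\norm{\rho_0-\rho_1}_1 = \frac{1}{2^{m-1}}\prod_{i=1}^m \norm{D_i}_1 = \frac{1}{2^{m-1}}\prod_{i=1}^m \norm{\sigma_0^i-\sigma_1^i}_1,
\]
as claimed. It is worth double-checking the $m=1$ base case ($\rho_0 = \sigma_0^1$, $\rho_1 = \sigma_1^1$, factor $2^0=1$) and the $m=2$ case by hand to make sure the normalization $1/2^{m-1}$ is correct rather than, say, $1/2^m$.

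The main obstacle is purely bookkeeping: getting the combinatorial cancellation in the expansion exactly right, including the signs and the $2^{-(m-1)}$ normalization (there are $2^{m-1}$ strings of each parity, which is where that factor enters). A clean way to organize this, which I would adopt to avoid index-juggling errors, is to introduce the linear map $\Phi_a$ on $\bigotimes_i (\text{Hermitians})$ defined by $\Phi_a(\tau_1\otimes\dots\otimes\tau_m) = \E_{\alpha_1\oplus\dots=a}{\dots}$ after the substitution above, and note that in the basis $\{\tfrac12(\sigma_0^i+\sigma_1^i), \tfrac12 D_i\}$ for each tensor factor, $\Phi_0-\Phi_1$ kills every basis product except $\bigotimes_i \tfrac12 D_i$ and scales that one by a constant one computes once. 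Everything else — Hermiticity of $D_i$, the spectral-decomposition description of the trace norm, multiplicativity under $\otimes$ — is standard and needs no more than a sentence.
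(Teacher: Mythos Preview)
Your proposal is correct and follows exactly the same route as the paper: establish the identity $\rho_0-\rho_1=\tfrac{1}{2^{m-1}}\bigotimes_{i=1}^m(\sigma_0^i-\sigma_1^i)$ and then invoke multiplicativity of the trace norm under tensor products. The paper simply states the identity without the expansion/cancellation details you supply, so your write-up is a fleshed-out version of the same argument.
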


\begin{proof}
  Write:
  \[
    \rho_0-\rho_1=
    \frac{1}{2^{m-1}}(\sigma_0^1-\sigma_1^1)\otimes\dots\otimes
    (\sigma_0^m-\sigma_1^m),
  \]
  and the claim follows from the fact that the trace norm is multiplicative with respect to the tensor product.
\end{proof}

\subsection{Communication complexity}

In this work we are interested in the following model of communication complexity.

\begin{definition}[Multi-party SMP]
  The~$k$-party simultaneous message passing (SMP) model
  involves~$k+1$ parties: $k$ players~$\A_1,\dots,\A_k$ and a referee.
  For every~$i\in[k]$, player~$\A_i$ gets input~$x_i$.
  They each send one message to the referee,
  who uses the content of all~$k$ messages to compute the output value.

  A communication protocol describes the action of each participant.
  The \emph{cost} or \emph{complexity} of a protocol
  is the total length of the messages sent by players~$A_1,\dots,A_k$ to the referee.
  We say that a protocol solves a computational problem defined over~$k$ input values
  if the referee gives a correct answer with probability at least~$2/3$ for each possible input.
\end{definition}

In this paper we will consider several further modifications of the SMP model:
\begin{itemize}
\item
  In the \emph{quantum} SMP model,
  the players~$\A_1,\dots,\A_k$ are allowed to send quantum messages,
  and the referee can perform any quantum measurement in order to determine the output value.
\item
  In the SMP model \emph{with shared randomness},
  the players~$\A_1,\dots,\A_k$ have free access to the same string of random bits
  that were chosen independently from the input values.
\item
  In Section~\ref{s_srand},
  we will define a \emph{hierarchy of modes of shared randomness}
  in multi-party protocols (where the strongest mode is the standard one, as described above).
  We demonstrate exponential separations between the levels of the hierarchy
  (i.e., the hierarchy is proper).
\end{itemize}

We call a communication protocol \emph{efficient}
if its cost is poly-logarithmic in the length of input.

\subsection{Read-$k$ families of functions}

Let us consider the following model of dependence among random variables.

\begin{definition}[Read-$k$ families]
  Let~$X_1,\dots,X_m$ be independent random variables.
  For~$j \in [r]$, let~$P_j\subseteq[m]$ and let~$f_j$ be a Boolean function of~$(X_i)_{i\in P_j}$.
  If every~$i\in [m]$ belongs to at most~$k$ among the~$r$ sets~$P_1,\dots,P_r$,
  then the random variables~$Y_j=f_j((X_i)_{i \in P_j})$ are called a \emph{read-$k$ family}.
\end{definition}

The following lemma is due to Finner~\cite{F92_A_Ge}.

\begin{lemma}[Finner~\cite{F92_A_Ge}]
  \label{l_Hold}
  Let~$Y_1,\dots,Y_n$ be a read-$k$ family of random variables taking non-negative values.
  Then
  \[
    \E*{\prod_{i=1}^nY_i}\le\prod_{i=1}^n\sqrt[k]{\E{Y_i^k}}.
  \]
\end{lemma}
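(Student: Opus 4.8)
I would prove this by induction on the number~$m$ of underlying independent variables~$X_1,\dots,X_m$; the inequality is a form of the generalized H\"older inequality, and the argument below is the standard tensorization-by-induction proof. First I would dispose of the trivial case: if~$\E{Y_i^k}=\infty$ for some~$i$, the right-hand side is~$+\infty$ and the bound is vacuous, so assume~$\E{Y_i^k}<\infty$ for every~$i$. The base case~$m=0$ is immediate: with no randomness each~$Y_i$ is a non-negative constant~$c_i$ and both sides equal~$\prod_i c_i$. For the inductive step, I would assume the statement for every read-$k$ family over~$m-1$ independent variables.

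Given a read-$k$ family~$Y_1,\dots,Y_n$ over~$X_1,\dots,X_m$ with~$Y_i=f_i\bigl((X_l)_{l\in P_i}\bigr)$, split~$[n]$ into~$A=\set{i:m\in P_i}$ and~$B=[n]\setminus A$; since coordinate~$m$ is read at most~$k$ times, $\abs{A}\le k$. Conditioning on~$X_1,\dots,X_{m-1}$ and using that~$X_m$ is independent of them,
\[
  \E*{\prod_{i=1}^n Y_i}
  =\E*_{X_1,\dots,X_{m-1}}{\Bigl(\prod_{i\in B}Y_i\Bigr)\cdot
    \E*_{X_m}{\prod_{i\in A}Y_i \mid X_1,\dots,X_{m-1}}}.
\]
For fixed values of~$X_1,\dots,X_{m-1}$, each~$Y_i$ with~$i\in A$ is a function of~$X_m$ alone; padding the inner product with~$k-\abs{A}$ factors equal to~$1$ and applying H\"older's inequality in the probability space of~$X_m$ with~$k$ conjugate exponents all equal to~$k$ gives
\[
  \E*_{X_m}{\prod_{i\in A}Y_i \mid X_1,\dots,X_{m-1}}
  \le\prod_{i\in A}\Bigl(\E*_{X_m}{Y_i^k \mid X_1,\dots,X_{m-1}}\Bigr)^{1/k}
  =\prod_{i\in A}Z_i,
\]
where~$Z_i\deq\bigl(\E*_{X_m}{Y_i^k \mid X_1,\dots,X_{m-1}}\bigr)^{1/k}$ is a non-negative function of~$(X_l)_{l\in P_i\setminus\set{m}}$.

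The crux is that~$\set{Y_i:i\in B}\cup\set{Z_i:i\in A}$ is again a read-$k$ family, now over~$X_1,\dots,X_{m-1}$: its attached index sets, namely~$\set{P_i:i\in B}\cup\set{P_i\setminus\set{m}:i\in A}$, are all subsets of~$[m-1]$, and for each~$l\in[m-1]$ the number of them containing~$l$ equals the number of original~$P_i$ ($i\in[n]$) that contain~$l$, hence is~$\le k$. By the induction hypothesis applied to this reduced family, together with the tower rule~$\E{Z_i^k}=\E{Y_i^k}$,
\[
  \E*{\prod_{i=1}^n Y_i}
  \le\prod_{i\in B}\bigl(\E{Y_i^k}\bigr)^{1/k}\prod_{i\in A}\bigl(\E{Z_i^k}\bigr)^{1/k}
  =\prod_{i=1}^n\bigl(\E{Y_i^k}\bigr)^{1/k},
\]
completing the induction. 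The one point needing care --- and the only place the hypotheses really bite --- is this reduction: integrating out~$X_m$ via H\"older must leave the~$Z_i$ forming a read-$k$ family over the remaining variables, so that the induction hypothesis applies, and it must do so with~$\E{Z_i^k}=\E{Y_i^k}$ \emph{exactly}, so that no slack accumulates over the~$m$ successive steps. The read-$k$ property survives precisely because deleting coordinate~$m$ from~$P_i$ for~$i\in A$ cannot increase any other coordinate's read-count; the base case, the finiteness reduction, and the ``pad to~$k$ factors'' form of H\"older are routine.
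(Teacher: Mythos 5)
Your proof is correct, and there is nothing in the paper to compare it against: the paper does not prove Lemma~\ref{l_Hold} at all, it imports it from Finner's work~\cite{F92_A_Ge}. What you have written is the standard self-contained tensorization proof of the read-$k$ special case: induct on the number of underlying independent variables, integrate out the last one using generalized H\"older with all exponents equal to~$k$ (padding with constant factors~$1$, legitimate because at most~$k$ of the~$Y_i$ read that variable), and check that the reduced family with~$Z_i=\bigl(\E_{X_m}{Y_i^k\mid X_1,\dots,X_{m-1}}\bigr)^{1/k}$ is again read-$k$ over the remaining variables with~$\E{Z_i^k}=\E{Y_i^k}$ exactly, so no loss accumulates over the induction. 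All the points that could go wrong are handled: the read-counts of the other coordinates do not increase when~$m$ is deleted from the sets~$P_i$, the tower rule gives the exact moment identity, and non-negativity (Tonelli) removes any integrability worries, with the infinite-moment case dismissed up front. One small remark: the paper's Definition of a read-$k$ family says ``Boolean function,'' but the lemma is needed (in the proof of Lemma~\ref{l_hyb-step}) for the $[0,1]$-valued variables~$Z_i=\alpha_{i,\vec R}(j_0)$, so the intended reading is general non-negative real-valued functions; your induction correctly works at that level of generality --- indeed it must, since your reduced variables~$Z_i$ are not Boolean even when the original~$Y_i$ are.
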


Note that the generalized H\"older inequality implies that~%
$\E{\prod Y_i}\le\prod\sqrt[n]{\E{Y_i^n}}$
in general, without making any independence assumption.
This corresponds to choosing~$k=n$ in the lemma.
On the other hand, when~$k=1$ (i.e., $Y_1,\dots,Y_n$ are mutually independent),
the expectation of their product equals the product of their expectations.
Accordingly, Lemma~\ref{l_Hold} gives a natural interpolation
between these two extreme cases.

\section{Hierarchy of shared randomness in multi-party protocols}
  \label{s_srand}

When there are more than two players,
it is possible to give the players access to shared randomness in several different ways.

Most naturally, the parties may have free access to the same string of random bits---%
we call this mode \emph{unrestricted shared randomness}.
Note that this mode of shared randomness is often implicitly assumed to be available to the players;
for example, unrestricted shared randomness is required in order to be able to use \emph{mixed strategies},
and therefore applicability of the minimax principle in multi-party communication depends on it.

Let~$k\ge2$ be the number of players.
For every~$t\in\set{2,\dots,k}$, we can define the mode of \emph{$t$-shared randomness},
where every~$t$ players share their own string of random bits.
The $k$-shared randomness is the same thing as the unrestricted shared randomness.
Sometimes we will refer to the~$(k-1)$-shared mode as \emph{randomness on the forehead}.

We will also consider \emph{XOR-shared randomness},
where every player~$\A_i$ is given access to an arbitrarily long random string~$r_i$,
such that every~$(k-1)$ strings~$r_i$ are uniform and mutually independent
but the bitwise XOR of~$r_1,\dots,r_k$ is~$0$ everywhere.

If we consider the case of~$k=2$, we can see that XOR-shared and $2$-shared modes are the same.
For~$k=3$, we already have three modes of shared randomness:
XOR-shared, $2$-shared and $3$-shared.
We will see below that these three modes offer different computational power.

In general, $t$-shared randomness is always at least as strong as~$(t-1)$-shared randomness,
as the latter can always be emulated using the former.
Also, XOR-shared randomness can be emulated in the $2$-shared mode;
to do that, let $r_i$ equal the bit-wise XOR
of the random string shared between~$\A_{i-1}$ and~$\A_i$
and the random string shared between~$\A_i$ and~$\A_{i+1}$,
where the subscripts are interpreted modulo~$k$.
Now the strings $r_1,\dots,r_k$ are distributed
as required by the definition of XOR-shared mode.

One interesting example
that demonstrates usefulness of XOR-shared randomness when~$k\ge3$
is the \emph{multi-party equality function};
that is, the total Boolean function of~$k$ arguments~$x_1,\dots,x_k$
that takes value~$1$ if and only if~$x_1=x_2=\dots=x_k$.

\begin{claim}
  \label{c_k-eq}
  For any~$c\in\NN$, there exists a classical protocol for the~$k$-party equality function,
  where XOR-shared randomness is used, each player sends~$c$ bits to the referee,
  and the following holds:
  \begin{itemize}
  \item If~$x_1=x_2=\dots=x_k$, then the referee's answer is always~$1$;
  \item otherwise, the referee's answer is~$0$ with probability~$1-1/2^c$.
  \end{itemize}
\end{claim}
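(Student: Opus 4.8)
The plan is to implement the standard linear-fingerprint protocol for equality, with the XOR constraint on the shared strings playing the role that a common random string plays in the two-party case. First I would set up the notation: we may assume the inputs are $n$-bit strings for some $n$, and, since the string handed to each player may be taken arbitrarily long, view that string as $c$ blocks $r_i^{(1)},\dots,r_i^{(c)}\in\zon$ for player $\A_i$, so that for each $j$ the tuple $(r_1^{(j)},\dots,r_k^{(j)})$ is distributed as a uniformly random element of $(\zon)^k$ conditioned on $r_1^{(j)}\oplus\dots\oplus r_k^{(j)}=\bar0$, and the $c$ tuples (one per block) are mutually independent. The protocol is then: player $\A_i$ sends the $c$ bits $m_i^{(j)}\deq\langle r_i^{(j)},x_i\rangle\bmod2$ for $j\in[c]$, and the referee answers $1$ iff $\bigoplus_{i=1}^k m_i^{(j)}=0$ for every $j\in[c]$, and $0$ otherwise.

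The heart of the argument is a per-block analysis. Fixing a block and dropping the superscript, I would use $r_k=r_1\oplus\dots\oplus r_{k-1}$ together with bilinearity of the inner product over $\GF(2)$ to rewrite the referee's bit for that block as
\[
  \bigoplus_{i=1}^k\langle r_i,x_i\rangle=\bigoplus_{i=1}^{k-1}\langle r_i,x_i\oplus x_k\rangle .
\]
Completeness is then immediate: if $x_1=\dots=x_k$, every summand is $0$, so each block contributes $0$ and the referee always outputs $1$, which is the first item. For the second item, suppose the $x_i$ are not all equal; then some $i^\ast\in[k-1]$ satisfies $x_{i^\ast}\neq x_k$ (otherwise $x_1=\dots=x_{k-1}=x_k$). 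Since the definition of XOR-shared randomness guarantees that any $k-1$ of the strings $r_1,\dots,r_k$ are uniform and mutually independent — in particular $r_1,\dots,r_{k-1}$ are — the term $\langle r_{i^\ast},x_{i^\ast}\oplus x_k\rangle$ is an unbiased bit, independent of the remaining terms $\langle r_i,x_i\oplus x_k\rangle$ with $i\in[k-1]\setminus\{i^\ast\}$, so the displayed sum is uniform over $\zo$. Hence in each block the referee's bit is $0$ with probability exactly $1/2$; using mutual independence of the $c$ blocks, the probability that the referee wrongly answers $1$ on a non-equal input is exactly $1/2^c$.

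I do not expect a genuine obstacle. The two points that need care are: one must peel off $r_k$ specifically, so as to reduce to the $k-1$ strings that the definition makes uniform and mutually independent, together with the small combinatorial observation that on a non-equal input at least one of $x_1,\dots,x_{k-1}$ differs from $x_k$; and the $c$ repetitions must use genuinely independent randomness, which is fine because disjoint blocks of the (arbitrarily long) shared strings are independent.
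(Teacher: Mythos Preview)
Your proposal is correct and is essentially the same argument as the paper's: the same inner-product protocol, the same rewriting via $r_1\oplus\dots\oplus r_k=\bar0$ to get $\bigoplus_{i=1}^{k-1}\langle r_i,x_i\oplus x_k\rangle$, and the same amplification by $c$ independent repetitions. The only cosmetic differences are that the paper presents the one-bit protocol first and then repeats it, whereas you build the $c$ blocks into the description from the start, and you spell out the ``some $i^\ast$ with $x_{i^\ast}\neq x_k$'' step a bit more explicitly.
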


\begin{proof}
  For~$r,x\in\zo^n$, let~$r\cdot x$ be the inner product of~$r$ and~$x$ in finite field~$\GF(2)$:
  $r\cdot x\deq\bigoplus_{i:r(i)=1}x(i)$.
  Consider the following protocol~$\Pk$, where the players use XOR-shared randomness
  and each of them sends a single bit to the referee:
  \begin{enumerate}[1.]
  \item
    For all~$i\in[k]$, the~$i$th player uses his random string~$r_i\in\zo^n$
    and computes~$m_i\deq r_i\cdot x_i$, then sends~$m_i$ to the referee.
  \item
    The referee outputs~$\neg(m_1\oplus\dots\oplus m_k)$.
  \end{enumerate}

  By the definition of XOR-shared randomness, we have that~$r_1\oplus\dots\oplus r_k=\bar0$.
  Therefore~$(r_1\oplus\dots\oplus r_k)\cdot x_k=0$, and we can write
  \begin{align}
    &
    m_1\oplus\dots\oplus m_k
    \nonumber \\
    &=r_1\cdot x_1\oplus\dots\oplus r_k\cdot x_k
    \nonumber \\
    &=r_1\cdot x_1\oplus\dots\oplus r_k\cdot x_k
    \oplus(r_1\oplus\dots\oplus r_k)\cdot x_k
    \nonumber \\
    &=r_1\cdot(x_1\oplus x_k)\oplus\dots\oplus r_{k-1}\cdot(x_{k-1}\oplus x_k).
    \label{m_keq}
  \end{align}

  Note that~$r_1,\dots,r_{k-1}$ is a uniformly random~$(k-1)$-tuple of~$n$-bit strings,
  and therefore the rightmost part of~(\ref{m_keq}) equals~$1$ with probability exactly~$1/2$
  if at least one of the~$k-1$ values~$x_1\oplus x_k,\dots,x_{k-1}\oplus x_k$ is different from~$\bar0$.
  If, on the other hand, $x_1=x_2=\dots=x_k$ then(\ref{m_keq}) equals~$0$ with certainty.

  Accordingly, $\Pk$ outputs ``$1$'' whenever $x_1=x_2=\dots=x_k$,
  and otherwise it outputs ``$0$'' with probability~$1/2$.
  To get a protocol as promised by our claim, we can run~$c$ independent instances of~$\Pk$ in parallel
  and output ``$1$'' if and only if all~$c$ instances answered ``$1$''.
\end{proof}

We are now prepared to prove that the modes of shared randomness form a proper hierarchy when~$k\ge3$.

\begin{reptheorem}{t_hie}
  \theohie
\end{reptheorem}

\begin{proof}
  To prove the first part of the theorem,
  consider the~$t$-party problem~$\GP_t$,
  letting the players~$\A_1,\dots,\A_t$ receive the corresponding fragments of input.
  (The other~$(k-t)$ players do not receive any input.)
  Theorem~\ref{t_main}, which will be proved in the next section,
  implies the result in this case.

  The second part follows from considering the two-party equality problem,
  when the input is distributed between~$\A_1$ and~$\A_2$
  and the other players do not receive any input.
  It is clear that this problem can be solved with constant cost
  in the classical SMP model with $2$-shared randomness.
  Now suppose that there exists a protocol of cost~$c$
  in the classical SMP model with XOR-shared randomness,
  and we will prove that~$c=\Omega(\sqrt{n})$.
  Note that in this model, the random strings given to~$\A_1$ and~$\A_2$
  are uniform and independent,
  although they are correlated with the random strings given to the other players.
  Such a protocol can be transformed without changing the cost
  to a protocol in the two-party classical SMP model
  where the two players do not share randomness but each player shares randomness with the referee,
  because in the latter model, the referee can generate all the messages
  which would have been generated by players~$\A_3,\dots,\A_k$.
  By the same technique used by Newman~\cite{N91_Pri},
  this protocol can be further transformed to a protocol of cost~$O(c+\log n)$
  in the two-party classical SMP model without shared randomness at all.
  It was shown by Newman and Szegedy~\cite{NS96_Pub}
  that the communication complexity of the equality problem in this model
  is~$\Omega(\sqrt n)$, and therefore~$c$ must be~$\Omega(\sqrt{n})$.

  The third part follows from considering the~$k$-party equality function.
  The upper bound is shown in Claim~\ref{c_k-eq}.
  The lower bound follows from Newman and Szegedy~\cite{NS96_Pub},
  because any~$k$-party SMP protocol without shared randomness among~$\A_1,\dots,\A_k$
  for the~$k$-party equality function
  can be used to construct a two-party SMP protocol without shared randomness
  between two players~$\A'_1$ and~$\A'_2$
  for the two-party equality function without affecting its cost:
  $\A'_1$ simulates~$\A_1$, and~$\A'_2$ simulates~$\A_2,\dots,\A_k$.
\end{proof}

\begin{remark}
  Besides its own elegance, the hierarchy of shared randomness is a useful technical tool.
  For our lower bound proof for the quantum complexity of~$\GPk$
  (which is the main technical result of Section~\ref{s_vs}),
  we need different modes of shared randomness.
  Informally speaking, we use a hybrid argument that puts certain restrictions on the input values,
  and those restrictions inevitably create shared randomness of certain type that becomes available to the players.
  We show that the sort of randomness that is introduced
  corresponds to one of the restricted modes of shared randomness,
  whose availability does not make the communication problem easy for quantum communication.
\end{remark}

\section{Shared randomness vs.\ quantum communication}
  \label{s_vs}

In this section, we will analyze the complexity of~$\GPk$
to compare the resource of shared randomness
to that of quantum communication in multi-party protocols.

Fix~$k\ge3$.
Recall that in the classical SMP model with shared randomness~$\GPk$ has a trivial solution,
where each player sends one bit to the referee.

As a warm-up, consider the case of quantum protocols \emph{without shared randomness}.%
\footnote{%
  We shall see soon why in the actual proof
  we have to consider different modes of shared randomness,
  even if our only purpose was to get a lower bound on the complexity of~$\GPk$
  in the quantum model without shared randomness.}
Let~$\Pk$ be a quantum protocol that communicates~$c$ qubits and solves~$\GPk$,
and let~$\Uk$ be the uniform distribution over~$k$-tuples~$(x_1,\dots,x_k)\in\zo^{n\times k}$.
Now consider the behavior of~$\Pk$ when the input is distributed according to~$\Uk$
(note that such input is almost never valid for~$\GPk$).

For~$i\in[k]$, let~$\sigma_i$ be a density matrix representing the (mixed) state
that the referee receives from~$\A_i$ when the input distribution is~$\Uk$.
Since the senders share no randomness and~$\Uk$ is a product distribution,
the state of the referee before his measurement is performed can be written as~%
$\sigma_1\otimes\dots\otimes\sigma_k$.

For~$i\in[k]$, let~$X_i$ be an~$n$-bit random string taking the value of input~$x_i$.
By Claim~\ref{c_rac}, there exists~$j_0 \in [n]$ such that~%
$\sum_{i=1}^k\norm{\sigma_0^i-\sigma_1^i}_1^2
 \in O(c/n)$,
where~$\sigma_a^i$ is the message from~$\A_i$, conditional upon~$X_i(j_0)=a$.

Since the random variables~$X_1(j_0),\dots,X_k(j_0)$ are mutually independent
and each~$X_i(j_0)$ can be correlated only with~$\sigma_i$,
Claim~\ref{c_xor} implies that the referee can predict~$X_1(j_0)\oplus\dots\oplus X_k(j_0)$
with probability at most~%
$1/2+(1/2^{k+1})\prod_{i=1}^k\norm{\sigma_0^i-\sigma_1^i}_1$,
which is at most~$1/2+O(c/(kn))^{k/2}$
by the inequality of arithmetic and geometric means.

Note that this guarantees that the ``advantage over random guess'' that the referee can have
in predicting~$X_1(j_0)\oplus\dots\oplus X_k(j_0)$
using the messages received from the players is~$o(1/n)$,
as long as~$k\ge3$ and~$c\in o(kn^{1-2/k})$.
Moreover, similar reasoning can be applied to conclude
that for most of the values of~$j_0\in[n]$,
the possible advantage in predicting~$X_1(j_0)\oplus\dots\oplus X_k(j_0)$ must be very small.

With this observation in hand, we would like to apply a ``hybrid-like'' reasoning,
arguing that in order to distinguish
between those inputs where most of bitwise XORs equal~$0$ and those where most equal~$1$,
a protocol should be able, informally, to ``accumulate advantage'' from different input positions.
We would like to claim that this is impossible as long as the advantage is negligible for almost every~$j_0\in[n]$.

Here comes the main subtlety of our proof.
Note that using hybrid-like argument puts a condition on a part of the input:
specifically, in order for the ``hybrid scenario'' to get through,
it has to be argued that it is hard for the protocol to predict most of the values of~$X_1(j)\oplus\dots\oplus X_k(j)$,
even if the players ``know'' the values of~$X_1(j')\oplus\dots\oplus X_k(j')$
for those positions~$j'$ that were considered in the earlier stages of the induction.
But such conditioning creates certain type of shared randomness between the players,
and we can no longer assume mutual independence of the messages received by the referee,
as we have done in the reasoning above.

Recall that we are dealing with a communication problem
that is easy in the presence of shared randomness even classically.
How can we hope for quantum hardness,
as required for the hybrid argument to be applicable?
It turns out that \emph{the mode of shared randomness that results from using the hybrid method
is not powerful enough to make the problem easy, even for quantum communication}.
Our proof of Lemma~\ref{l_hyb-step} below follows rather closely the outline given above,
but it also contains some new ingredients
required to make the argument robust against weaker modes of shared randomness.

\subsection{Exponential Separation for multi-party protocols}
  \label{ss_proof}

We are ready to prove our main technical statement.

\begin{reptheorem}{t_main}
  \theomain
\end{reptheorem}

First, we set up notation to describe protocol~$\Pk$ which uses~$t$-shared randomness.
Let~$\Vkt=\{S \subseteq [k]\colon |S|=t\}$.
For any~$S \in \Vkt$, let~$R_S$ be the random string (of arbitrary length)
shared by the players~$\A_i$ for~$i \in S$.
Then~$\A_i$ holds the~$R_S$'s for all~$S \in \Vkt$ containing~$i$.
For convenience, let~$\tilde{R}_i=(R_S)_{i \in S \in \Vkt}$
be the~$\binom{k-1}{t-1}$-tuple of these random strings.
For each instantiation~$R_S=r_S$,
let~$\tilde{r}_i=(r_S)_{i \in S \in \Vkt}$ be the corresponding instantiation of~$\tilde{R}_i$.
In addition, let~$\vR=(R_S)_{S \in \Vkt}$ be the~$\binom{k}{t}$-tuple of all shared random strings,
and let~$\vr=(r_S)_{S \in \Vkt}$ be any instantiation of~$\vR$.
For each~$i \in [k]$, player~$\A_i$ sends a quantum state~$\rho^i_{x_i,\tilde{r}_i}$
conditional upon receiving input~$x_i$ and random strings~$\tlR_i=\tlr_i$.
Let~$c_i$ be the length of the quantum message sent by~$\A_i$;
i.e., the length of state~$\rho^i_{x_i,\tlr_i}$ is~$c_i$ qubits.
By assumption, $c \deq \sum_{i=1}^k c_i = o(kn^{1-t/k})$.

As before, let~$\Uk$ be the uniform distribution over~$k$-tuples~$(x_1,\dots,x_k)\in\zo^{n\times k}$.
To prove the theorem, we will use the following lemma.

\begin{lemma}
  \label{l_hyb-step}
  Let~$2\le t\le k-1$,
  and let~$\Pk$ be a quantum SMP protocol of cost~$c=o(kn^{1-t/k})$
  that uses~$t$-shared randomness as defined above.
  Suppose the player~$\A_i$ receives the random input~$X_i$ for~$(X_1,\dots,X_k) \in \U(\zonk)$.
  Then there exists~$J \subseteq [n]$ of size at least~$2n/3$
  such that for every~$j \in J$ a referee who is allowed to apply an arbitrary quantum measurement
  to the messages received according to~$\Pk$
  can predict the value of~$X_1(j)\oplus\dots\oplus X_k(j)$
  with probability at most~$1/2+o(1/n)$.
\end{lemma}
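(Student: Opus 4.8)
The plan is to run a hybrid argument over the coordinates $j\in[n]$, at each step "exposing" the parities $X_1(j)\oplus\dots\oplus X_k(j)$ of the coordinates handled earlier, and to show that the resulting conditional distribution on the messages still has the structure needed to invoke Claim~\ref{c_rac} and Claim~\ref{c_xor}. First I would set up the hybrid: for a permutation (or just the natural order) of $[n]$, consider revealing the values $p_{j'}=\bigoplus_i X_i(j')$ for $j'$ already processed. Conditioning on these parities does not make the individual bits $X_i(j)$ (for $j$ not yet processed) dependent on each other in an arbitrary way — it only introduces a correlation of exactly the XOR-shared type among the $k$ players' strings, because fixing $\bigoplus_i X_i(j')$ for a single coordinate is precisely a linear parity constraint. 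I would make this precise: conditioned on a transcript of exposed parities, the joint law of $(X_1,\dots,X_k)$ restricted to the un-exposed coordinates is a product over coordinates of the uniform distribution on $k$-bit strings with a prescribed parity, i.e.\ it is a mixture over "free" coordinates (uniform, independent) and coordinates tied by an XOR constraint.

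Next I would analyze a single step. Fix the exposed parities; then for a fresh coordinate $j$, the messages $\sigma_0^i,\sigma_1^i$ that $\A_i$ sends conditioned on $X_i(j)=a$ are still well-defined, but the $X_i(j)$ are now correlated through the XOR-shared randomness $R_S$ held by the players (the $t$-shared randomness of the protocol, possibly augmented by the parity-induced correlation). The key point, emphasized in the paragraph preceding the lemma, is that this extra correlation is of "XOR-shared" type — every $(k-1)$-subset of the players still sees mutually independent uniform bits for coordinate $j$, and only the full $k$-tuple is constrained. Because $t\le k-1$, the actual $t$-shared randomness $R_S$ is independent of at least one player's bit given the others; combining this with Claim~\ref{c_xor}'s tensor-multiplicativity I would argue that the referee's advantage in predicting $\bigoplus_i X_i(j)$, averaged over the shared randomness, is bounded by a product $\prod_i \norm{\sigma_0^i-\sigma_1^i}_1$ up to the $2^{-(k+1)}$ factor, exactly as in the warm-up. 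Then Claim~\ref{c_rac} applied per player (there are $k$ players, total cost $c$, $n$ coordinates) gives $\sum_{j}\sum_i\norm{\sigma_0^{i,j}-\sigma_1^{i,j}}_1^2 = O(c)$, hence by the AM–GM/power-mean inequality $\sum_j \prod_i \norm{\sigma_0^{i,j}-\sigma_1^{i,j}}_1 \le \sum_j \bigl(\tfrac1k\sum_i\norm{\sigma_0^{i,j}-\sigma_1^{i,j}}_1^2\bigr)^{k/2}\cdot k^{k/2}$, which for $c=o(kn^{1-t/k})$ is $o(n^{1 - (\text{something})})$; choosing the exponents so that this sum is $o(1)$ forces all but at most $n/3$ of the coordinates $j$ to have advantage $o(1/n)$. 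That set of "good" coordinates is the desired $J$ of size $\ge 2n/3$.

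The main obstacle — and the place where I expect the real work to be — is controlling how the hybrid conditioning interacts with the bound from Claim~\ref{c_rac}. Claim~\ref{c_rac} as stated applies to an \emph{unconditioned} uniform $X\sim\U(\zon)$ for a single player; here, after exposing parities of earlier coordinates, each player's input on the remaining coordinates is still uniform (exposing $\bigoplus_i X_i(j')$ leaks nothing about $X_i$ alone when the other players' bits are unconstrained), so Claim~\ref{c_rac} should still apply to each $\A_i$'s message as a function of $X_i$ — but one has to be careful that the referee's measurement may now also depend on the exposed parities, i.e.\ we really have a family of states indexed by the exposed transcript, and the $O(c)$ bound must hold in expectation over that transcript. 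I would handle this by fixing the transcript, applying Claim~\ref{c_rac} for that fixed transcript (the state $\rho^i_{x_i,\tlr_i}$ only grows by the classical register recording exposed parities, which does not increase the qubit count seen by the information-theoretic argument, or at worst adds $O(\log n)$), then averaging. The secondary subtlety is matching the quantitative bound to the claimed $\Omega(kn^{1-t/k})$: I would track the exponents through AM–GM carefully, using that with $t$-shared randomness the relevant product is over $k$ players but the XOR-structure only "costs" a factor depending on $t$ versus $k$, yielding threshold $c = \Theta(kn^{1-t/k})$ rather than $n^{1-2/k}$. Once Lemma~\ref{l_hyb-step} is in place, Theorem~\ref{t_main} follows by a standard hybrid/triangle-inequality argument over the $\ge 2n/3$ good coordinates to separate the all-parities-$0$-majority inputs from the all-parities-$1$-majority inputs.
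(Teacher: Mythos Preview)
Your proposal conflates two distinct arguments and, more importantly, misses the one ingredient that makes the lemma work. The hybrid over coordinates (exposing parities one at a time) is \emph{not} part of the proof of Lemma~\ref{l_hyb-step}; it is how the paper later uses the lemma to prove Theorem~\ref{t_main}. The lemma itself is a direct, non-inductive statement about uniform inputs: fix the shared randomness $\vR=\vr$, apply Claim~\ref{c_rac} per player to get $\sum_j\sum_i\alpha_{i,\vr}(j)^2\le O(c)$ with $\alpha_{i,\vr}(j)=\tfrac12\norm{\sigma^i_{0,\vr}(j)-\sigma^i_{1,\vr}(j)}_1$, average over $\vR$, and use Markov to extract $J$ of size $\ge 2n/3$ with $\E\bigl[\sum_i\alpha_{i,\vR}(j_0)^2\bigr]\le O(c/n)$ for $j_0\in J$. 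No exposing of parities is needed here.

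The real gap is how you pass from this second-moment bound to a bound on the referee's advantage. For fixed $\vr$ the messages are a tensor product, so Claim~\ref{c_xor} gives $\norm{\sigma_{0,\vr}(j_0)-\sigma_{1,\vr}(j_0)}_1=2\prod_i\alpha_{i,\vr}(j_0)$; but after averaging over $\vR$ you face $\E_{\vR}\bigl[\prod_{i=1}^k\alpha_{i,\vR}(j_0)\bigr]$, an expectation of a product of $k$ \emph{dependent} random variables (each $\alpha_{i,\vR}$ depends on all $R_S$ with $i\in S$). Your sketch treats this as if it were bounded by $\prod_i\norm{\sigma_0^i-\sigma_1^i}_1$ for some unconditional states, or by a straight AM--GM on $\sum_i\alpha_i^2$; neither works. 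The paper's device is to observe that the $\alpha_{i,\vR}(j_0)$ form a \emph{read-$t$ family} in the independent variables $(R_S)_{S\in\Vkt}$ (each $R_S$ is read by exactly the $t$ indices $i\in S$), and then invoke Finner's inequality (Lemma~\ref{l_Hold}) to get
\[
\E\Bigl[\prod_i\alpha_{i,\vR}(j_0)\Bigr]\le\Bigl(\prod_i\E[\alpha_{i,\vR}(j_0)^t]\Bigr)^{1/t}\le\Bigl(\prod_i\E[\alpha_{i,\vR}(j_0)^2]\Bigr)^{1/t}\le\Bigl(\tfrac1k\sum_i\E[\alpha_{i,\vR}(j_0)^2]\Bigr)^{k/t},
\]
using $0\le\alpha\le1$, $t\ge2$, and AM--GM. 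This is exactly where the exponent $k/t$ (and hence the threshold $c=o(kn^{1-t/k})$) comes from. Without Finner you only have generalized H\"older, which gives exponent $k/k=1$ and the trivial bound $c=o(k)$; your hand-wave that ``the XOR-structure only costs a factor depending on $t$ versus $k$'' does not substitute for this step.
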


\begin{proof}
  Let~$\sigma^i_{a,\vr}(j)$ be the expectation of~$\rho^i_{X_i,\tlR_i}$
  conditional upon~$X_i(j)=a$ and~$\vR=\vr$,
  for any~$i \in [k]$, $j \in [n]$, $a \in \zo$ and possible~$\vr$.
  Define~$\alpha_{i,\vr} \in \mathbb{R}^n$ as
  \begin{equation}
    \alpha_{i,\vr}(j)=\frac12 \norm{\sigma^i_{0,\vr}(j)-\sigma^i_{1,\vr}(j)}_1.
    \label{eq:defaivj}
  \end{equation}
  Then by Claim~\ref{c_rac},
  \[
    \sum_{j=1}^n (\alpha_{i,\vr}(j))^2 \le O(c_i).
  \]
  Taking the sum of both sides over~$i \in [k]$ and using~$\sum_{i=1}^k c_i=c$ yields
  \[
    \sum_{i=1}^k\sum_{j=1}^n (\alpha_{i,\vr}(j))^2 \le O(c).
  \]
  This holds for any possible~$\vr$. So
  \[
    \E*{\sum_{i=1}^k\sum_{j=1}^n (\alpha_{i,\vR}(j))^2}\le O(c).
  \]
  (Here the expectation is taken with respect to the~$R_S$'s).
  Thus, there exists some~$J \subseteq [n]$ of size at least~$2n/3$
  such that, for any~$j_0 \in J$,
  \begin{equation}
    \E*{\sum_{i=1}^k (\alpha_{i,\vR}(j_0))^2} \le O\left(\frac{c}{n}\right).
    \label{eq:sum_aivij}
  \end{equation}
  Let~$\sigma_{a,\vr}(j_0)$ be the expectation
  of~$\rho^1_{X_1,\tlR_1} \otimes \dots \otimes \rho^k_{X_k,\tlR_k}$
  conditional upon~$X_1(j_0)\oplus \dots \oplus X_k(j_0)=a$ and~$\vR=\vr$,
  for any~$a \in \zo$ and possible~$\vr$.
  Since the~$X_i(j_0)$'s are i.i.d.\ with~$X_i(j_0)\sim \U(\zo)$,
  and they are also independent from the~$R_S$'s, we have
  \[
    \sigma_{a,\vr}(j_0)=\E_{a_1 \oplus \dots \oplus a_k=a}
    {\sigma^1_{a_1,\tlr_1}(j_0) \otimes \dots \otimes \sigma^k_{a_k,\tlr_k}(j_0)}.
  \]
  (Here the expectation is taken with respect to the~$a_i$'s).
  So by Claim~\ref{c_xor} and~(\ref{eq:defaivj}) we get
  \[
    \norm{\sigma_{0,\vr}(j_0)-\sigma_{1,\vr}(j_0)}_1
    = 2 \prod_{i=1}^k \alpha_{i,\vr}(j_0).
  \]
  Now let~$\sigma_{a}(j_0)$ be the expectation
  of~$\rho^1_{X_1,\tlR_1} \otimes \dots \otimes \rho^k_{X_k,\tlR_k}$
  conditional upon~%
  $X_1(j_0)\oplus \dots \oplus X_k(j_0)=a$, for~$a \in \zo$.
  Then~$\sigma_{a}(j_0)=\E{\sigma_{a,\vR}(j_0)}$.
  Thus,
  \begin{align*}
    \norm{\sigma_{0}(j_0)-\sigma_{1}(j_0)}_1
    &=\norm{\E{\sigma_{0,\vR}(j_0)}-\E{\sigma_{1,\vR}(j_0)}}_1 \\
    &\le
    \E{\norm{\sigma_{0,\vR}(j_0)-\sigma_{1,\vR}(j_0)}_1} \\
    &= 2\E*{\prod_{i=1}^k \alpha_{i,\vR}(j_0)}.
  \end{align*}
  Note that~$Z_i\deq\alpha_{i,\tlR_i}(j_0)$ is a non-negative function of the $R_S$'s
  for~$i \in S \in \Vkt$.
  (Recall that~$\tlR_i=(R_S)_{i \in S \in \Vkt}$.)
  Since the~$R_S$'s are independent random variables,
  and each~$R_S$ is read~$t$ times (by the~$Z_i$'s for $i \in S$),
  we know that~$Z_1,\dots,Z_k$ form a read-$t$ family.
  Thus, by invoking Lemma~\ref{l_Hold}, we get
  \[
    \E* {\prod_{i=1}^k \alpha_{i,\vR}(j_0)}
    \le \left(\prod_{i=1}^k \E{(\alpha_{i,\vR}(j_0))^t}\right)^{1/t}.
  \]
  Since~$0 \le \alpha_{i,\vR}(j_0) \le 1$ and~$t\ge2$, we have
  \[
    \left(\prod_{i=1}^k \E{(\alpha_{i,\vR}(j_0))^t}\right)^{1/t}
    \le \left(\prod_{i=1}^k \E{(\alpha_{i,\vR}(j_0))^2}\right)^{1/t}.
  \]
  Then by the inequality of arithmetic and geometric means and~(\ref{eq:sum_aivij}),
  \begin{align*}
    {\left(\prod_{i=1}^k \E{(\alpha_{i,\vR}(j_0))^2}\right)\!\!}^{1/t}
    &\le
    {\left(\frac{1}{k}\sum_{i=1}^k \E{(\alpha_{i,\vR}(j_0))^2}\right)\!}^{k/t}
    \\
    &\le
    \left( O\left( \frac{c}{kn}\right)\right)^{k/t}
    = o\left(\frac{1}{n}\right),
  \end{align*}
  provided~$c=o(kn^{1-t/k})$.
  So, we have~$\norm{\sigma_{0}(j_0)-\sigma_{1}(j_0)}_1=o(1/n)$.
  Namely, the referee can predict the value of~$X_1(j_0) \oplus \dots \oplus X_k(j_0)$
  with probability at most~$1/2+o(1/n)$.
  This holds for any $j_0 \in J$.
\end{proof}

\begin{proof}[Proof of Theorem~\ref{t_main}]
  Let~$\Pk$ be a quantum SMP protocol of cost~$c=o(kn^{1-t/k})$
  that uses~$t$-shared randomness.
  We will show that there exist~$L=\floor{3n/4}$ coordinates%
  \footnote{%
    In fact, our statement holds for any~$L \le (1-\varepsilon)n$,
    where~$\varepsilon$ can be any small constant.}
  $j_1,\dots,j_L \in [n]$ satisfying the following conditions.
  For~$l \in [L]$ and~$a \in \zo$,
  let~$W_{l,a}$ be the set of~$k$-tuples~$(x_1,\dots,x_k) \in \zonk$
  satisfying~$x_1(j) \oplus \dots \oplus x_k(j)=a$ for~$j=j_1,j_2,\dots,j_l$,
  and let~$\tau_{l,a}=\E{\rho^1_{X_1,\tlR_1}\otimes\dots \otimes \rho^k_{X_k,\tlR_k}}$
  for~$(X_1,\dots,X_k) \sim \U(W_{l,a})$
  (here the expectation is taken with respect to the~$X_i$'s and~$R_S$'s).
  Then:
  (i) $\norm{\tau_{1,0}-\tau_{1,1}}_1=o(1/n)$;
  (ii) for any~$l \in [L-1]$, $\norm{\tau_{l,0}-\tau_{l+1,0}}_1=o(1/n)$
  and~$\norm{\tau_{l,1}-\tau_{l+1,1}}_1=o(1/n)$.
  If this is true, then by the triangle inequality,
  \begin{align}
    &\norm{\tau_{L,0}-\tau_{L,1}}_1
    \nonumber \\
    & \le
    \sum_{l=1}^{L-1}\norm{\tau_{l,0}-\tau_{l+1,0}}_1
    +\sum_{l=1}^{L-1}\norm{\tau_{l,1}-\tau_{l+1,1}}_1
    \nonumber \\
    &\quad
    +\norm{\tau_{1,0}-\tau_{1,1}}_1
    \nonumber \\
    &=o(L/n)
    \nonumber \\
    &=o(1).
    \label{eq:triseq}
  \end{align}
  On the other hand, for any~$(x_1,\dots,x_k) \in W_{L,0}$,
  it holds that~$|x_1 \oplus x_2 \oplus \dots \oplus x_k|\le n-L < n/3$
  and hence~$\GPk(x_1,\dots,x_k)=0$.
  Similarly, for any~$(x_1,\dots,x_k) \in W_{L,1}$,
  it holds that~$|x_1 \oplus x_2 \oplus \dots \oplus x_k|\ge L > 2n/3$
  and hence~$\GPk(x_1,\dots,x_k)=1$.
  Therefore, if the referee can correctly predict the value of~$\GPk(x_1,\dots,x_k)$
  on any~$(x_1,\dots,x_k) \in W_{L,0} \sqcup W_{L,1}$,
  then he should be able to distinguish between~$\tau_{L,0}$ and~$\tau_{L,1}$
  with probability at least $2/3$, which implies that~$\norm{\tau_{L,0}-\tau_{L,1}}_1=\Omega(1)$.
  But this is contradictory to~(\ref{eq:triseq}).
  So the referee must fail to solve~$\GPk$
  on some valid input from~$W_{L,0}$ or~$W_{L,1}$.

  To find the desired~$j_1,\dots,j_L$,
  we use one initial step and~$L-1$ inductive steps as follows.

  \textbf{Initial Step:}
  Consider the behavior of~$\Pk$ on the random input~$(X_1,\dots,X_k) \in \U(\zonk)$.
  By a straightforward application of Lemma~\ref{l_hyb-step},
  there exists~$J \subseteq [n]$ of size at least~$2n/3$
  such that, for any~$j \in J$,
  the referee can predict the value of~$X_1(j) \oplus \dots \oplus X_k(j)$
  with probability at most~$1/2+o(1/n)$.
  In other words, for any~$j \in J$, we have~%
  $\norm{\sigma_0(j)-\sigma_1(j)}_1=o(1/n)$
  where~$\sigma_{a}(j)$ is the expectation
  of~$\rho^1_{X_1,\tlR_1} \otimes \dots \otimes \rho^k_{X_k,\tlR_k}$
  conditional upon~$X_1(j)\oplus \dots \oplus X_k(j)=a$, for~$a \in \zo$.
  Set~$j_1$ to be any~$j \in J$.
  Then~$\norm{\tau_{1,0}-\tau_{1,1}}_1=\norm{\sigma_0(j)-\sigma_1(j)}_1=o(1/n)$ as desired.

  \textbf{Inductive Step:}
  Suppose now we have fixed~$j_1,\dots,j_l \in [n]$ for some~$l \le 3n/4$.
  Let~$T=\{j_1,\dots,j_l\}$ and~$T^c=[n] \setminus T$.

  Let us consider the behavior of~$\Pk$ on the random input~$(X_1,\dots,X_k) \sim \U(W_{l,0})$
  (recall that~$W_{l,0}$ is the set of~$(x_1,\dots,x_k) \in \zonk$
  satisfying~$(x_1 \oplus \dots \oplus x_k)_T=\bar{0}$).
  Note that the~$X_i$'s are not completely independent but only~$(k-1)$-wise independent.
  So there exists some correlation among the inputs to different players,
  which might be exploited to gain some advantage.
  However, we will show that, even in this case,
  the referee can still predict the value of~$X_1(j) \oplus \dots \oplus X_k(j)$
  with probability at most~$1/2+o(1/n)$ for at least~$2/3$ fraction of~$j \in T^c$.

  Let~$Y_i=(X_i)_T$.
  Then~$(Y_1,\dots,Y_k)$ is uniformly distributed
  among all~$(y_1,\dots,y_k)\in \zolk$ satisfying~$y_1 \oplus \dots \oplus y_k=\bar{0}$.
  Namely, $(Y_1,\dots,Y_k)$ can be viewed as some XOR randomness
  (which is a special kind of~$2$-shared randomness) shared by the players.
  Also, note that the~$(X_i)_{T^c}$'s are i.i.d.\ with~$(X_i)_{T^c} \sim \U(\zo^{n-l})$,
  and they are also independent from the~$Y_i$'s.
  Finally, the~$Y_i$'s and~$(X_i)_{T^c}$'s are all independent from the~$R_S$'s.

  Now consider the following protocol~$\Pk'$ which attempts to solve~$\GPk$ for~$(n-l)$-bit strings.
  The players share XOR randomness~$(Y_1',\dots,Y_k')$
  which has the same distribution as~$(Y_1,\dots,Y_k)$.
  In addition, they also share~$t$-shared randomness~$(R_S')_{S \in \Vkt}$
  which has the same distribution as~$(R_S)_{S \in \Vkt}$.
  Furthermore, the~$Y_i'$'s and~$R_S'$'s are independent.
  Now suppose player~$\A_i$ receives input~$x_i' \in \zo^{n-l}$ and random strings~$Y_i'=y_i$ and~$\tlR_i'=\tlr_i$.
  Then~$\A_i$ first finds the unique~$x_i \in \zon$ such that~$(x_i)_T=y_i$ and~$(x_i)_{T^c}=x_i'$,
  and then sends the quantum message~$\rho^i_{x_i,\tlr_i}$ of~$\Pk$ to the referee.

  Since the~$(Y_1',\dots,Y_k')$ is a special kind of~$2$-shared randomness,
  $\Pk'$ uses only~$t$-shared randomness.
  So by Lemma~\ref{l_hyb-step}
  (replacing the original~$n$ by~$n-l$ and noting that~$n-l=\Theta(n)$, since~$l \le 3n/4$),
  we know that, on the random input~$(X_1',\dots,X_k') \sim \U(\zo^{(n-l)\times k})$,
  there exists~$J' \subseteq [n-l]$ of size at least~$2(n-l)/3$
  such that for any~$j \in J'$ the referee can predict the value
  of~$X_1'(j)\oplus \dots \oplus X_k'(j)$
  with probability at most~$1/2+o(1/n)$
  using the messages received according to~$\Pk'$.
  Meanwhile, by the construction of~$X_i'$'s, $Y_i'$'s, $R_S'$'s and~$\Pk'$,
  it is obvious that the joint message sent according to~$\Pk'$
  has the same distribution as~%
  $\rho^1_{X_1,\tlR_1}\otimes\dots \otimes\rho^k_{X_k,\tlR_k}$.
  In addition, the bits of~$X_i'$ are in one-to-one correspondence with the bits of~$(X_i)_{T^c}$.
  Thus, getting back to the original protocol~$\Pk$,
  we know that, on the random input~$(X_1,\dots,X_k) \sim \U(W_{l,0})$,
  there exists~$J_0 \subseteq T^c$ of size at least~$2(n-l)/3$ (corresponding to~$J'$)
  such that for any~$j \in J_0$
  the referee can predict the value of~$X_1(j)\oplus \dots \oplus X_k(j)$
  with probability at most~$1/2+o(1/n)$
  using the messages received according to~$\Pk$.
  So for any~$j \in J_0$, we have~$\norm{\sigma_0(j)-\sigma_1(j)}_1=o(1/n)$,
  where $\sigma_a(j)$ is the expectation of~%
  $\rho^1_{X_1,\tlR_1}\otimes\dots \otimes\rho^k_{X_k,\tlR_k}$
  conditional upon~$X_1(j)\oplus \dots \oplus X_k(j)=a$ for~$a \in \zo$.
  Now if we set~$j_{l+1}=j$,
  then depending on the value of~$x_1(j)\oplus \dots \oplus x_k(j)$,
  $W_{l,0}$ is split into to two equal-sized subsets:
  $W_{l+1,0}$ and~$W_{l,0}\setminus W_{l+1,0}$.
  It follows that~$\tau_{l,0}=(\sigma_0(j)+\sigma_1(j))/2$ and~$\tau_{l+1,0}=\sigma_0(j)$, and hence~%
  $\norm{\tau_{l,0}-\tau_{l+1,0}}_1
  =\norm{\sigma_{0}(j)-\sigma_{1}(j)}_1/2=o(1/n)$.

  By a similar argument, we can also prove that,
  on the random input~$(X_1,\dots,X_k) \sim \U(W_{l,1})$,
  there also exists~$J_1 \subseteq T^c$ of size at least~$2(n-l)/3$
  such that for any~$j \in J_1$ the referee can predict~$X_1(j) \oplus \dots \oplus X_k(j)$
  with probability at most~$1/2+o(1/n)$.
  Then, if we set~$j_{l+1}$ to be any~$j \in J_1$,
  then we get~$\norm{\tau_{l,1}-\tau_{l+1,1}}_1=o(1/n)$.

  Now since~$|J_0|,|J_1| \ge 2(n-l)/3$, $J_0 \cap J_1$ must be non-empty.
  We set~$j_{l+1}$ to be any~$j \in T_0 \cap T_1$.
  Then we achieve both~%
  $\norm{\tau_{l,0}-\tau_{l+1,0}}_1=o(1/n)$
  and~%
  $\norm{\tau_{l,1}-\tau_{l+1,1}}_1=o(1/n)$.

  Iterate this inductive step~$L-1$ times,
  and in the end we obtain the desired~$j_1,\dots,j_L$.
  This completes a proof of the theorem.
\end{proof}

\section*{Acknowledgments}

The authors thank anonymous reviewers for helpful comments on previous versions of this paper.
The authors acknowledge support by ARO/NSA under grant W911NF-09-1-0569.

\bibliography{tex}

\begin{thebibliography}{BCWdW01}

\bibitem[ANTSV02]{ANTV02_De}
A.~Ambainis, A.~Nayak, A.~Ta-Shma, and U.~Vazirani.
\newblock Dense quantum coding and quantum finite automata.
\newblock {\em Journal of the ACM}, 49(4):496--511, 2002.

\bibitem[BCWdW01]{BCWW01_Qu}
H.~Buhrman, R.~Cleve, J.~Watrous, and R.~de~Wolf.
\newblock Quantum fingerprinting.
\newblock {\em Physical Review Letters}, 87(16, Article 167902), 2001.

\bibitem[Fin92]{F92_A_Ge}
H.~Finner.
\newblock A generalization of h{\"{o}}lder inequality and some probability
  inequalities.
\newblock {\em The Annals of Probability}, 20(4):1893--1901, 1992.

\bibitem[GKRdW06]{GKRW06_Bou}
D.~Gavinsky, J.~Kempe, O.~Regev, and R.~de~Wolf.
\newblock Bounded-error quantum state identification and exponential
  separations in communication complexity.
\newblock In {\em Proceedings of the 38th Symposium on Theory of Computing},
  pages 594--603, 2006.

\bibitem[Nay99]{N99_Op_Lo}
A.~Nayak.
\newblock Optimal lower bounds for quantum automata and random access codes.
\newblock In {\em Proceedings of the 40th Annual Symposium on Foundations of
  Computer Science}, pages 369--377, 1999.

\bibitem[New91]{N91_Pri}
I.~Newman.
\newblock Private vs. common random bits in communication complexity.
\newblock {\em Information Processing Letters}, 39(2):67--71, 1991.

\bibitem[NS96]{NS96_Pub}
I.~Newman and M.~Szegedy.
\newblock Public vs. private coin flips in one round communication games.
\newblock In {\em Proceedings of the 28th Symposium on Theory of Computing},
  pages 561--570, 1996.

\bibitem[Yao03]{Y03_On_th}
A.~C.-C. Yao.
\newblock On the power of quantum fingerprinting.
\newblock In {\em Proceedings of the 35th Symposium on Theory of Computing},
  pages 77--81, 2003.

\end{thebibliography}

\end{document}